\newtheorem{defn}{Definition}
\newtheorem{thm}{Theorem}
\newtheorem{cor}[thm]{Corollary}
\newtheorem{const}{Construction}
\newtheorem{claim}{Claim}
\newcommand{\bit}{\begin{itemize}}
\newcommand{\eit}{\end{itemize}}
\newcommand{\bcor}{\begin{cor}}
\newcommand{\ecor}{\end{cor}}
\newcommand{\beq}{\begin{equation}}
\newcommand{\eeq}{\end{equation}}
\newcommand{\beqn}{\begin{equation*}}
\newcommand{\eeqn}{\end{equation*}}
\newcommand{\bea}{\begin{eqnarray}}
\newcommand{\eea}{\end{eqnarray}}
\newcommand{\bean}{\begin{eqnarray*}}
\newcommand{\eean}{\end{eqnarray*}}
\newcommand{\ben}{\begin{enumerate}}
\newcommand{\een}{\end{enumerate}}
\newcommand{\bdefn}{\begin{defn}}
\renewcommand\footnotemark{}
\begin{document}
\sloppy
\title{Binary Codes with Locality for Four Erasures}

\author{
\IEEEauthorblockN{S. B. Balaji, K. P. Prasanth and P. Vijay Kumar, \it{Fellow}, \it{IEEE}}

\IEEEauthorblockA{Department of Electrical Communication Engineering, Indian Institute of Science, Bangalore.  \\ Email: balaji.profess@gmail.com, prasanthkp231@gmail.com, pvk1729@gmail.com} 

\thanks{P. Vijay Kumar is also an Adjunct Research Professor at the University of Southern California.  This research is supported in part by the National Science Foundation under Grant 1421848 and in part by an India-Israel UGC-ISF joint research program grant.} 
\thanks{S. B. Balaji would like to acknowledge the support of TCS research scholarship program.}
}
\maketitle

\begin{abstract}
In this paper, codes with locality for four erasures are considered. An upper bound on the rate of codes with locality with sequential recovery from four erasures is derived. The rate bound derived here is field independent. An optimal construction for binary codes meeting this rate bound is also provided. The construction is based on regular graphs of girth $6$ and employs the sequential approach of locally recovering from multiple erasures. An extension of this construction that generates codes which can sequentially recover from five erasures is also presented.
\end{abstract}


\begin{IEEEkeywords} Distributed storage, codes with locality, sequential repair, multiple erasures.
\end{IEEEkeywords}

\section{Introduction}
An $[n,k]$ code is said to have locality $r$ if each of the $n$ code symbols of $\mathcal{C}$ can be recovered by accessing at most $r$ other code symbols. Equivalently, there exist $n$ codewords ${h_1 \cdots h_n}$ in the dual code $\mathcal{C}^\perp$ such that $c_i \in \text{supp}(h_i)$ and $|\text{supp}(h_i)| \leq r+1$ for $1 \leq i \leq n$ where $c_i$ denote the $i^\text{th}$ code symbol of $\mathcal{C}$ and $\text{supp}(h_i)$ denote the support of the codeword $h_i$.
\paragraph{Codes with Sequential Recovery}
An $[n,k]$ code is defined as a code with sequential recovery \cite{BalPraKum} from $t$ erasures having locality $r$ if for any set of $s \leq t$ erased symbols, $\{c_{\sigma_1},...,c_{\sigma_s} \}$, there is an arrangement of these $s$ symbols (say) $\{c_{\sigma_{i_1}},...,c_{\sigma_{i_s}}\}$ such that there are $s$ codewords $\{h_1,...,h_s\}$ in the dual of the code, each of Hamming weight $\leq r+1$, with $\sigma_{i_j} \in \text{supp}(h_j)$ and $\text{supp}(h_j) \cap \{\sigma_{i_{j+1}},\cdots,\sigma_{i_s}\} = \emptyset$, $\forall 1 \leq j \leq s$.  The parameter $r$ is the locality parameter and we will formally refer to this class of codes as $(n,k,r,t)_{\text{seq}}$ codes. When the parameters $(n,k,r,t)$ are clear from the context, we will simply refer to a code in this class as a code with sequential recovery.
\subsection{Background}
In \cite{GopHuaSimYek} P. Gopalan et al. introduced the concept of codes with locality (see also \cite{PapDim,OggDat}), where an erased code symbol is recovered by accessing a small subset of other code symbols. The size of this subset denoted by $r$ is typically much smaller than the dimension of the code, making the repair process more efficient compared to MDS codes. The authors of \cite{GopHuaSimYek} considered codes that can locally recover from single erasures (see also \cite{GopHuaSimYek,HuaChenLi,KamPraLalKum,TamBar_Optimal_LRC})

The sequential approach introduced by Prakash et al. \cite{PraLalKum} is one of the many approaches to locally recover from multiple erasures. Codes employing this approach have been shown to be better in terms of rate and minimum distance (see \cite{PraLalKum,RawMazVis,SonYue_Binary_local_repair,BalPraKum}). The authors of \cite{PraLalKum} considered codes that can sequentially recover from two erasures (see also \cite{SonYue_3_Erasure}). Codes with sequential recovery from three erasures can be found discussed in \cite{SonYue_3_Erasure,BalPraKum}.			 

Alternate approaches for local recovery from multiple erasures can be found in \cite{KamPraLalKum,SonDauYueLi,ZhaWanGe,HuaYaaUchSie,TamBarFro,KimNamSong,SheFuGua,WanZha_Combinatorial_Repair_locality,TamBar_Optimal_LRC,JuaHolOgg,RawPapDimVis_arxiv,WangZhang_multiple_erasure}.
\subsection{Our Contributions} 
In this paper, binary codes with locality for four erasures are considered. An upper bound on the rate of $(n,k,r,4)_{seq}$ code is derived. A construction for codes achieving this rate bound with equality using the sequential approach is also provided. The construction uses regular bi-partite graphs of girth $6$. Finally, we show that this construction can be easily modified to generate codes with sequential recovery for five erasures.
\section{Upper Bound on Rate}
The following theorem gives an upper bound on the rate of $(n,k,r,4)_\text{seq}$ codes.
\begin{thm} \label{thm:rate_4}
The rate of an $(n,k,r,4)_\text{seq}$ code over a field $\mathbb{F}_q$ satisfies the following upper bound:
\bean
\frac{k}{n} \leq \frac{r^2}{r^2+2r+2}.
\eean
\end{thm}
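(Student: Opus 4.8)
The plan follows the pattern of the known rate bounds for sequential recovery from two and three erasures: recast the problem as a statement about a Tanner graph built from low-weight dual codewords, and extract the bound by a counting argument driven by a ``no small stopping set'' reformulation. Write $\mathcal H$ for the set of all codewords of $\mathcal C^{\perp}$ of Hamming weight at most $r+1$. A standard peeling argument shows $\mathcal C$ is an $(n,k,r,4)_{\text{seq}}$ code if and only if no coordinate set $S$ with $1\le|S|\le 4$ is a stopping set for $\mathcal H$, i.e.\ has $|\mathrm{supp}(h)\cap S|\ne 1$ for every $h\in\mathcal H$; in particular ($|S|=1$) every coordinate lies in the support of some $h\in\mathcal H$. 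Since the row space of any matrix with rows in $\mathcal C^{\perp}$ has dimension $\le n-k$, it suffices to build a matrix $H$ with rows in $\mathcal H$ and prove $\mathrm{rank}(H)\ge\frac{(2r+2)n}{r^{2}+2r+2}$, equivalently $n\le\frac{r^{2}+2r+2}{2(r+1)}\,\mathrm{rank}(H)$; then $n-k\ge\mathrm{rank}(H)$ finishes the proof.

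Next I would choose $H$ carefully: select its rows greedily from $\mathcal H$ so that every coordinate is covered, the rows stay linearly independent, and overlaps are kept small (when a coordinate can be covered in several ways, prefer an $h\in\mathcal H$ creating the fewest new pairwise intersections with the rows already chosen). Let $a=\mathrm{rank}(H)$ be the number of rows, and form the bipartite Tanner graph $G$ on the $n$ coordinates and $a$ checks, each check of degree $\le r+1$. Partition the coordinates into the $b$ of degree $1$ in $G$ and the $n-b$ of degree $\ge 2$. Double counting the edges of $G$ gives $2n-b\le\sum_v\deg_G(v)\le a(r+1)$, but this crude estimate is not enough by itself.

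The heart of the argument is to show, using that $\mathcal H$ has no stopping set of size $\le 4$, that neither the degree-$\ge 2$ coordinates nor the degree-$1$ coordinates can be concentrated. I would prove forbidden-configuration lemmas in the spirit of the three-erasure bound: that the substructure carried by the degree-$\ge 2$ coordinates has no short cycles (girth $\ge 6$ in $G$), that any two checks share at most one such coordinate, and that a check of weight $\le r+1$ can be ``responsible'' for only a bounded number of degree-$1$ coordinates once its degree-$\ge 2$ coordinates are fixed. Feeding these refinements back into the edge count should upgrade $2n-b\le a(r+1)$ to $n\le\frac{r^{2}+2r+2}{2(r+1)}\,a$, which is the required bound; the equality case --- every check of weight $r+1$, exactly two degree-$1$ coordinates, $r(r+2)$ degree-$2$ coordinates, and girth $6$ --- matches precisely the regular girth-$6$ bipartite graph used in the optimal construction.

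The main obstacle is this forbidden-configuration analysis together with obtaining the count in its tight form. A stopping set of size $4$ can be realized by configurations of several different shapes --- a path of checks, a short cycle, a small tree of checks sharing coordinates --- so both the list of forbidden local patterns and the ensuing case analysis are intricate, and any loose version of them loses the $+2r+2$ in the denominator. A secondary but genuine technical point is the rank bookkeeping in the choice of $H$: one must verify that the greedily selected low-weight dual codewords can simultaneously be kept linearly independent and made to cover every coordinate, so that the combinatorial count really lower-bounds $n-k$.
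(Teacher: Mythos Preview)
Your outline has the right target inequality and the right flavor (build a parity-check matrix $H$ from low-weight dual words, bound $n$ against $\mathrm{rank}(H)$ by a counting argument), but the way you propose to extract structural constraints on $H$ will not go through as stated, and the decisive combinatorial step is missing.

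First, the greedy row-selection and the worry about ``rank bookkeeping'' are red herrings. The paper simply takes \emph{any} $m$ linearly independent elements of $\mathcal H$ that form a basis of $\mathcal B_0=\mathrm{span}(\mathcal H)$; no minimization of overlaps is needed. The point you do not use is that because the rows of $H$ span $\mathcal B_0$, a stopping set of size $\le 4$ for $\mathcal H$ is the same thing as a set of $\le 4$ \emph{linearly dependent columns} of $H$. Thus the usable constraint is purely linear-algebraic: every four columns of $H$ are linearly independent. Your forbidden-configuration program is phrased for the Tanner graph $G$ of $H$, but $G$ only records supports, and over $\mathbb{F}_q$ with $q>2$ the support picture is too coarse; in particular your ``girth $\ge 6$ in $G$'' claim is false in general (two weight-$2$ columns with identical support can be independent), so that route cannot yield the field-independent bound the theorem asserts.

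Second, the paper's proof does not use girth at all. It sorts the columns of $H$ by Hamming weight and, after permuting, writes
\[
H=\left[\begin{array}{c|c|c|c} D_{s_1} & A & 0 & D\\ \hline 0 & B & C & \end{array}\right],
\]
where the first $s_1$ columns have weight $1$ (and hence occupy $s_1$ distinct rows, forming a diagonal $D_{s_1}$), the $[A\mid B]^{\!T}$ block holds the weight-$2$ columns with one nonzero among those $s_1$ rows, $C$ holds the remaining weight-$2$ columns, and $D$ the columns of weight $\ge 3$. The crucial constraint, which replaces your girth lemma, is that \emph{each row of $B$ has at most one nonzero entry}: two such entries, together with two suitable weight-$1$ columns from $D_{s_1}$, would give four linearly dependent columns. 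This single observation yields $s_{21}:=\#\text{cols of }[A/B]\le m-s_1$ and, via row/column nonzero counts in $A$, $C$ and $H$, the chain
\[
s_1\ \ge\ \frac{m-p}{r+1},\qquad s_2\ \le\ (m-s_1-p)+\frac{(m-s_1)r+p}{2},\qquad 3n-2s_1-s_2\ \le\ m(r+1),
\]
with $p=m-s_1-s_{21}\ge 0$, which combine (the coefficient of $p$ turns out negative) to $3n\le \dfrac{3(r^2+2r+2)}{2(r+1)}\,m$, i.e.\ exactly your target $n\le\dfrac{r^2+2r+2}{2(r+1)}\,m$. None of this requires any special choice of rows or any cycle analysis; what your proposal is missing is precisely this column-weight decomposition and the ``one nonzero per row of $B$'' step that makes the count tight.
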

\begin{proof}
Let $\mathcal{C}$ be an $(n,k,r,4)_\text{seq}$ code. Let,
\begin{eqnarray} 
\mathcal{B}_{0} & = & \text{span}\left({\bf c} \in \mathcal{C}^{\perp}, |\text{supp}({\bf c})| \leq r+1 \right).
\end{eqnarray} 
i.e., $\mathcal{B}_{0}$ is the span of all local parity-checks (those parity-checks having Hamming weight $\leq(r+1)$). Let $m$ denote the dimension of the subcode $\mathcal{B}_{0}$.
We have 
\beq
n-k \geq m. \label{eq:n_k_bnd}
\eeq
Choose $m$ linearly independent vectors from the set $\{{\bf c} \in \mathcal{C}^{\perp}, |\text{supp}({\bf c})| \leq r+1\}$, and form an $(m \times n)$ matrix $H'$ with these $m$ vectors as its rows. Let $s_1$ denote the number of columns of $H'$ having a Hamming weight of $1$ and $s_2$ denote the number of columns of $H'$ having a Hamming weight of $2$ . Permute the rows and columns of the matrix $H'$ to get a matrix $H$ as shown in \eqref{eq:4_erasure_matrix}.	Note that this permutation does not affect the rank of the matrix or the Hamming weights of its rows/columns.
\bea
H & = &  \left[ 
\begin{array}{c|c|c|c}
D_{s_1}  & A & 0 & \multirow{2}{*}{D}\\
\cline{1-3}
0 & B & C &			
\end{array}
\right] \label{eq:4_erasure_matrix}
\eea 
The sub-matrix $\left[\frac{D_{s_1}}{0}\right]$ is an $(m \times s_1)$ matrix comprising the $s_1$ columns of $H$ having weight $1$. Note that out of the $s_1$ columns having weight $1$, no two columns can have a non-zero entry in the same row, as those two columns would form a set of two linearly dependent columns preventing local recovery from $2$ erasures. Therefore, WLOG we can assume that $D_{s_1}$ is an $(s_1 \times s_1)$ matrix with $s_1$  non-zero entries on its diagonal and zeros elsewhere.

The sub matrix $\left[\frac{A}{B}\right]$ comprises those columns of $H$ having weight $2$ with the first non zero element in the first $s_1$ rows and the second non zero element in the next $(m-s_1)$ rows. i.e., $A$ and $B$ are matrices having columns of Hamming weight one each. Note that among the columns having weight $2$, no column can have both its non zero entries in the first $s_1$ rows, otherwise the column with both non zero entries in the first $s_1$ rows along with a certain chosen set of $2$ columns of $D_{s_1}$ will form a set of three linearly dependent vectors, preventing local recovery from $3$ erasures. Hence, the remaining columns of $H$ which have a Hamming weight of $2$ form the sub matrix $\left[\frac{0}{C}\right]$. Each column of $C$ has Hamming weight $2$. The sub-matrix $D$ contains all columns of $H$ having Hamming weight three or more.

Let $s_{21}$ denote the number of columns of $\left[\frac{A}{B}\right]$ and $s_{22}$ denote the number of columns of $\left[\frac{0}{C}\right]$. Therefore, $s_{21}+s_{22}=s_2$.

Consider the matrix $B$. Assume that there is a row in $B$ with more than one non zero element. Consider the two columns where this row has its two non-zero elements. It is straightforward to see that these two columns, along with a certain chosen set of two columns from $D_{s_1}$ form a set of $4$ linearly dependent columns, making local recovery from $4$ erasures impossible. Hence each row in $B$ can have at most one non zero element. 
It follows that  (counting the number of non-zero entries in the rows and columns of $B$)
\bean
s_{21} & \leq & m-s_1.\\
\text{Assume that } \hspace{0.5cm} s_{21} & = & m-s_1-p, \hspace{0.5cm} \text{where } 0 \leq p \leq m-s_1.
\eean
Each of the $(m-s_1-p)$ (or $s_{21}$) columns of $A$ has only one non-zero element. Each of the $s_1$ rows of $A$ can have at most $r$ non zero elements. Counting the number of non-zero entries in the rows and columns of $A$ :
\bea
\therefore s_1 & \geq & \frac{m-s_1-p}{r} \notag \\
s_1 & \geq & \frac{m-p}{r+1}\label{eq:s1_lb}
\eea
The number of non zero elements in the sub-matrix $[B|C]$ is upper bounded by $(m-s_1)(r+1)$. The sub-matrix $B$ consists of $m-s_1-p$ non-zero elements (one in each column). Therefore the number of non zero elements in $C$ is upper bounded by $(m-s_1)(r+1)-(m-s_1-p)$. Each column of $C$ has $2$ non zero elements. Thus we have  (counting the number of non-zero entries in the rows and columns of $C$):
\bea 
s_{22} & \leq & \frac{(m-s_1)(r+1)-(m-s_1-p)}{2}  = \frac{(m-s_1)r+p}{2} \notag\\
\therefore \ s_2 & = &s_{21} + s_{22}  \leq  (m-s_1-p) + \frac{(m-s_1)r+p}{2} \label{eq:s2_ub}
\eea
Consider the matrix $H$. Each of the $m$ rows of $H$ can have at most $(r+1)$ non-zero values. Thus we have (counting the number of non-zero entries in the rows and columns of $H$):
\bean
s_1+2s_2+3(n-s_1-s_2) & \leq & m(r+1)\\
3n-2s_1-s_2 & \leq & m(r+1)
\eean
Substituting the upper bound on $s_2$ in \eqref{eq:s2_ub}, in the above inequality gives:
\beqn
3n+s_1\left(\frac{r}{2}-1\right)+\frac{p}{2} \leq m\left(\frac{3r}{2}+2 \right).
\eeqn
Substituting the lower bound on $s_1$ in \eqref{eq:s1_lb}, in the above inequality gives:
\bean
3n+\frac{m-p}{r+1}\left(\frac{r}{2}-1\right)+\frac{p}{2} & \leq & m\left(\frac{3r}{2}+2 \right),\\
3n & \leq & m\left(\frac{3r}{2}+2-\frac{r-2}{2(r+1)}\right) +p\left(\frac{r-2}{2(r+1)}-\frac{1}{2}\right).
\eean
The coefficient of $p$ in the above inequality is negative for $r \geq 1$. 
\bea
\therefore 3n \leq m\left(\frac{3r}{2}+2-\frac{r-2}{2(r+1)}\right) \label{eq:3n_bnd}
\eea
From \eqref{eq:n_k_bnd} and \eqref{eq:3n_bnd}, we have:
\bean
\frac{k}{n} \leq 1-\frac{m}{n} \leq \frac{r^2}{r^2+2r+2}
\eean
\end{proof}
	
\section{Optimal Construction}
In this section we provide a construction that generates $(n,k,r,4)_\text{seq}$ codes that achieve the upper bound on rate given in Theorem~\ref{thm:rate_4} with equality.
\begin{const} \label{const:four_reg}
Consider an $r$-regular bipartite graph on $2L$ nodes, having girth at least $6$. The edges of this bipartite graph represent information symbols of the code. A node in the bi-paritite graph represent a code symbol which is the parity of the $r$ information symbols corresponding to the $r$ edges connected to it. Consider $r$ copies of this graph. Let $n_l^{(i)}$ represent the $l^\text{th}$ node of the $i^\text{th}$ bipartite graph, $1 \leq l \leq 2L$, $1\leq i \leq r$. The nodes in one copy of the bipartite graph are labeled as follows. Nodes labeled $n_1^{(i)} \cdots n_L^{(i)}$ are the nodes appearing on one side, say the top, of the $i^\text{th}$ bipartite graph and nodes $n_{L+1}^{(i)} \cdots n_{2L}^{(i)}$  appear on the other side, (i.e., on the bottom). Consider the set $\{n_1^{(1)}, n_1^{(2)}, \cdots n_1^{(r)}\}$, the set of $r$ nodes appearing in the first position in the $r$ bipartite graphs. Nodes in this set are connected to a new node $N_1$ added to the graph. These connections are represented by using dotted lines as the corresponding edges do not represent code symbols. Node $N_1$ represents a new code symbol that holds the parity of the $r$ parity symbols represented  by the $r$ nodes connected to $N_1$. 

A total of $(2L)$ new nodes $\{N_1 \cdots N_{2L}\}$ are added to the graph in a similar manner. Node $N_j$ represents the code symbol that stores the parity of the $r$ symbols represented by the nodes $\{n_j^{(1)}, n_j^{(2)}, \cdots n_j^{(r)}\}$, $1 \leq j \leq 2L$, which are connected to $N_j$.

Altogether the graph has $(Lr^2)$ solid edges and $(2Lr+2L)$ nodes. Thus the code has $(Lr^2)$ information symbols and $(2Lr+2L)$ parity symbols. Hence the code has a rate of $\frac{r^2}{r^2+2r+2}$, which matches the upper bound in Theorem~\ref{thm:rate_4}. Hence the construction is rate-optimal for $t=4$.
\end{const}

An example code for the case $r=3,L=7$ is provided in Fig.~\ref{fig:four_regular}.
\begin{figure}[h!]
\centering
\includegraphics[width=6.5in]{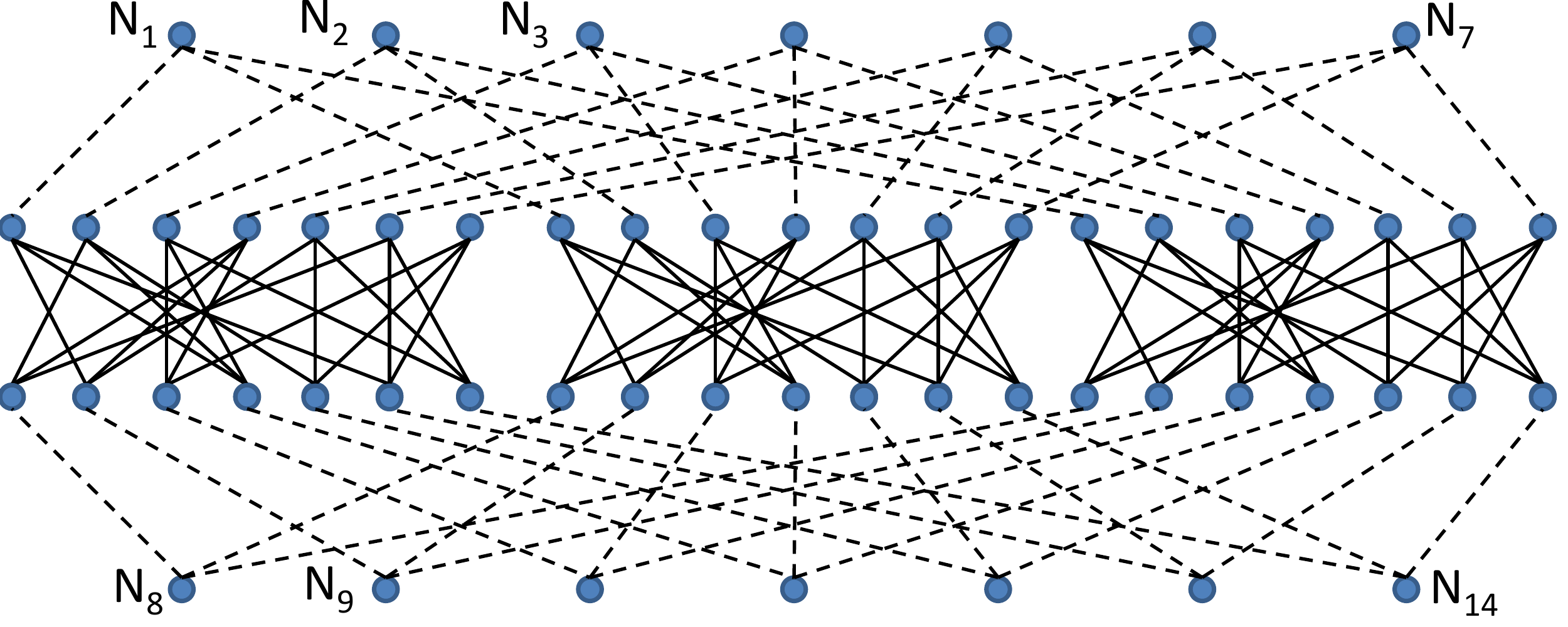}
\caption[Example bipartite graph code for $4$ erasures]{An example code obtained from Construction~\ref{const:four_reg} for $r=3,L=7$. Solid edges represent information symbols. Nodes represent parity symbols.}
\label{fig:four_regular}
\end{figure}
\begin{claim}
A code generated by Construction~\ref{const:four_reg} is an $(n,k,r,4)_\text{seq}$ code.
\end{claim}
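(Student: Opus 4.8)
The plan is to verify the sequential-recovery property directly from the combinatorial structure of the graph, by describing an explicit order in which any pattern of $s \le 4$ erased symbols can be peeled off one at a time, each time using a dual codeword (parity check) of weight $\le r+1$ whose support avoids the not-yet-recovered erased positions. The code symbols are of three types: information symbols (solid edges of the $r$ copies of the bipartite graph), "level-1" parity symbols (the nodes $n_l^{(i)}$, each the parity of its $r$ incident information edges), and "level-2" parity symbols (the nodes $N_j$, each the parity of the $r$ symbols $n_j^{(1)},\dots,n_j^{(r)}$). Each of these three families of parity relations is a dual codeword of weight exactly $r+1$, so every symbol lies in the support of some weight-$\le r+1$ check; this establishes locality $r$ and handles $s=1$ trivially.

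For the general case I would do a case analysis on how the $s\le 4$ erasures distribute among the three types, and within the information symbols, how they distribute among the $r$ graph copies and among the edges/nodes of a single copy. The key local facts to extract from "$r$-regular bipartite, girth $\ge 6$" are: (i) two distinct edges of one copy share at most one endpoint; (ii) no two edges sharing an endpoint also share their other endpoint (no multi-edges, immediate from girth $\ge 6$); and (iii) there is no path/cycle of length $4$ using two edges, i.e., for any two edges $e,e'$ that are "close", the four endpoints they touch are not collapsed — more precisely girth $\ge 6$ forbids a $4$-cycle, which is exactly what rules out the bad configuration that appeared in the converse proof. The recovery strategy is: first peel any erased information symbol that is the unique erased edge at one of its endpoints (recover it from that node's level-1 check, which then has support meeting no remaining erasure); repeat. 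If at some stage every remaining erased edge has a "partner" erased edge at each of its two endpoints, then — because there are at most $4$ of them and by girth — the only surviving configurations are very small (e.g. two disjoint pairs, or a path of length $3$, or ...), and in each such configuration one can instead recover an erased \emph{node} symbol first (using a level-2 check $N_j$, which has support $\{n_j^{(1)},\dots,n_j^{(r)},N_j\}$ disjoint from the information edges entirely, or a level-1 check at an endpoint touched by exactly one erased edge once a node is restored), thereby breaking the deadlock. Erased $N_j$'s are always recoverable from their level-2 check once the relevant $n_j^{(i)}$ are known, and erased $n_l^{(i)}$'s either from their level-1 check (if at most $r$ of their incident edges remain erased — but here at most $4\le r$ fails only when $r<4$, which must be checked separately) or, failing that, from the level-2 check $N_j$ through that node.

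Concretely I would organize the write-up as: (1) list the three check families and note weight $r+1$ and that they cover all symbols; (2) prove the three girth consequences (i)–(iii); (3) give the peeling algorithm and show it terminates having recovered everything, by arguing that a "stuck" state with $\le 4$ erased edges all of which are doubly-covered cannot occur, or can be unstuck via a node recovery — this is done by enumerating the $O(1)$ shapes on $\le 4$ edges in a girth-$\ge6$ bipartite graph and checking each; (4) handle the small-$r$ corner cases ($r=1,2,3$) explicitly if the generic argument assumed $r\ge 4$.

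The main obstacle is step (3): making the case enumeration genuinely exhaustive and clean. The delicate point is the "all erased edges are doubly-covered" deadlock — one must be sure that girth $\ge 6$ plus $s\le 4$ forces such a configuration to be, say, two vertex-disjoint pairs of parallel-looking edges or a short path, and then exhibit the right node-check to recover first; a sloppy enumeration risks missing a configuration (in particular ones spread across several of the $r$ copies, which are actually easy, versus ones concentrated in one copy, which are the constraining ones). I also want to double-check the interaction with the dual-codeword support condition in the definition of $(n,k,r,t)_{\mathrm{seq}}$: each peeled check's support must avoid all \emph{later}-recovered erased positions, so the order matters and the argument must produce the order, not just the set of checks used.
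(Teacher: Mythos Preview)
Your plan is correct in outline and uses the same ingredients as the paper's proof (the three families of weight-$(r+1)$ checks, the girth-$\ge 6$ fact that any $\le 4$ edges in one copy leave some endpoint of degree $1$, and the observation that the level-2 check at $N_j$ has support disjoint from all information edges). The paper, however, organises the case analysis differently: it splits on $x$, the number of erased \emph{information} symbols, and handles $x=0,1,2,3,4$ in turn. This organisation is worth noting because it cleanly separates the two distinct deadlock mechanisms that your peeling description partly conflates: an information edge can be blocked at an endpoint either because another erased \emph{edge} shares that endpoint, or because the endpoint \emph{node} $n_l^{(i)}$ itself is erased. Your ``doubly-covered'' discussion focuses on the first mechanism, but the crucial case (the paper's $x=1$) is the second: one edge with \emph{both} endpoint nodes erased and a fourth erasure somewhere. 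The paper dispatches this by observing that the two level-2 checks through $N_l$ and $N_{l'}$ have disjoint supports, so the single remaining erasure can block at most one of them. Your proposal contains this idea (``recover an erased node symbol first using a level-2 check''), but you should make explicit that the relevant level-2 checks are pairwise disjoint, which is what guarantees the unstick step always succeeds.

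Two smaller points. First, your parenthetical about recovering $n_l^{(i)}$ from its level-1 check ``if at most $r$ of their incident edges remain erased'' is garbled: that check recovers $n_l^{(i)}$ only if \emph{none} of its $r$ incident edges are still erased, so the ``$r<4$'' worry is misplaced and no separate small-$r$ analysis is needed. Second, the enumeration in your step~(3) is lighter than you fear: once you split by $x$ as the paper does, each case is a couple of lines, because for $x\ge 2$ the girth forces a free endpoint, and for $x\le 1$ the level-2 disjointness argument above finishes it.
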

\begin{proof}
Assume that $x$ number of information symbols are erased and $4-x$ parity symbols are erased. We consider the cases $x=0,x=1,x=2,x=3$ separately and prove that sequential recovery is possible in each case. Throughout the proof we use the terms 'nodes' (or 'edges') and 'code symbols' interchangeably. Hence, whenever it is mentioned that a node $n_l^{(i)}$ is erased, it means that the corresponding parity symbol is erased. Similarly, an edge is erased means that the corresponding information symbol is erased.
\ben
\item[(I)] \emph{$x=0 :$} Each node $n_l^{(i)}, 1 \leq l \leq 2L, \ 1 \leq i \leq r$ in the bipartite graph can be recovered locally from the edges (note that none of the edges are erased in this case). Once the erasures among nodes $n_l^{(i)}, 1 \leq l \leq 2L, \ 1 \leq i \leq r$ are recovered, the erasures among the nodes $N_1 \cdots N_{2L}$ can be recovered locally using nodes $n_l^{(i)}, 1 \leq l \leq 2L, \ 1 \leq i \leq r$. Hence, sequential recovery is possible in this case.
\item[(II)] \emph{$x=1 :$} Let $n_l^{(i)}$ and $n_{l'}^{(i)}$ be the two nodes appearing on the two ends of the one edge that is erased. Then, the erased edge can be recovered using two parity checks $P_l$ and $P_{l'}$ where, $P_l$ is the parity-check involving node $n_l^{(i)}$ and the edges connected to $n_{l}^{(i)}$, and $P_{l'}$ is the parity-check involving node $n_{l'}^{(i)}$ and the edges connected to $n_{l'}^{(i)}$. No more information symbol (i.e., edge) erasures are possible in this case. Hence the erased information symbol can be recovered locally unless both $n_j^{(l)}$ and $n_k^{(l)}$ are erased. 

Consider the case when both $n_l^{(i)}$ and $n_{l'}^{(i)}$ are erased, so that the erased information symbol cannot be recovered using $P_l$ and $P_{l'}$. Node $n_l^{(i)}$ can be recovered using the parity-check involving node $N_l$ and the $r$ nodes connected to it. Similarly, $n_{l'}^{(i)}$ can be recovered using the parity-check involving node $N_{l'}$ and the $r$ nodes connected to it. Also, note that these two parity-checks have disjoint support sets. Hence, irrespective of the fourth erasure, either $n_l^{(i)}$ or $n_{l'}^{(i)}$ can be recovered. Subsequently, the remaining erasures can be repaired sequentially.

\item[(III)] \emph{$x=2 :$} We consider two different cases.
\bit
\item[(a)] Assume that both erased edges belong to the same bipartite graph, say the $i^{th}$ bipartite graph. Note that any pair of edges in a bipartite graph can have at most one node in common. Hence, out of the $2L$ nodes $n_l^{(i)}, 1 \leq l \leq 2L$, there are atleast two nodes that are connected to only one of the two erased edges. Let those nodes be $n_l^{(i)}$ and $n_{l'}^{(i)}$, $l \neq l'$. Using similar arguments as in case (II), it can be proved that sequential recovery is possible in this case.
\item[(b)] Assume that the two erased edges belong to two different bipartite graphs. The edges are connected to $4$ distinct nodes. No more edges can be erased in this case. Hence at least one of the erased edges can be recovered unless the $4$ nodes are erased (which is not possible as it would make the total number of erasures 6). Once an erased edge is recovered, the remaining erasures can be repaired sequentially.
\eit
\item[(IV)] \emph{$x=3 :$} The bipartite graph considered here has a girth of at least $6$. Hence the code can recover from upto five (i.e., $(girth-1)$, see \cite{RawMazVis}) information symbol erasures sequentially, if none of the nodes $n_l^{(i)}, 1 \leq l \leq 2L, \ 1 \leq i \leq r$ are erased. Therefore, assume that a node $n_l^{(i)}$ is erased. But this node can be recovered using the parity check involving $N_l$ and the $r$ nodes connected to it. Subsequently, the $3$ erased information symbols can be recovered. 
\item[(V)] \emph{$x=4 :$} The bipartite graph considered here has a girth of at least $6$. Hence, even when $4$ edges are erased, there is at least one node that is connected to exactly one of the erased edges. Hence sequential recovery is possible. 
\een
\end{proof}

\section{Construction for Five Erasures}
The construction presented in this section generates an $(n,k,r,5)_\text{seq}$ code by adding a few parity symbols to the $(n,k,r,4)_\text{seq}$ code obtained from Construction~\ref{const:four_reg}. 
\begin{const} \label{const:five_regular}
Consider a code obtained from Construction~\ref{const:four_reg}. Let 
\bean
A_i & = & \{n_1^{(i)} \cdots n_L^{(i)}\},\\
B_i & = & \{n_{L+1}^{(i)} \cdots n_{2L}^{(i)}\}.
\eean
Consider the $2L$ nodes $N_1 \cdots N_{2L}$. Each of these $2L$ nodes store the parity of $r$ nodes in a fixed position in the $r$ copies of the bipartite graph. Out of these $2L$ nodes, $N_1 \cdots N_L$ represent the nodes connected to the nodes on one side of the bipartite graphs (see Fig.~\ref{fig:five_bipart}), i.e., nodes in $\cup_{i=1}^rA_i$. The set $\{N_1 \cdots N_L\}$ is partitioned into $(\lfloor \frac{L}{r}\rfloor)$ sets of size $r$ and one set of size $L \text{ mod } r$. Let the sets in this paritition be $C_1,...,C_{\lceil\frac{L}{r}\rceil}$. Add $\lceil\frac{L}{r}\rceil$ new nodes $S_1,...,S_{\lceil\frac{L}{r}\rceil}$ to the graph. The nodes in the set $C_i$ are connected to $S_i$. Each of the $\lceil\frac{L}{r}\rceil$ new nodes $S_i$ represent the parity of the nodes connected to it. All five erasure patterns that the $(n,k,r,4)_\text{seq}$ code cannot correct are of the form shown in Fig.~\ref{fig:five_bipart}. The new parity nodes added to the code can help in correcting each of these erasure patterns. Hence we obtain an $(n,k,r,5)_\text{seq}$ code having a rate of $\frac{Lr^2}{Lr^2+2Lr+2L+\lceil\frac{L}{r}\rceil}$.
\end{const}
\begin{figure}[H]
	\centering
	\includegraphics[width=6.5in]{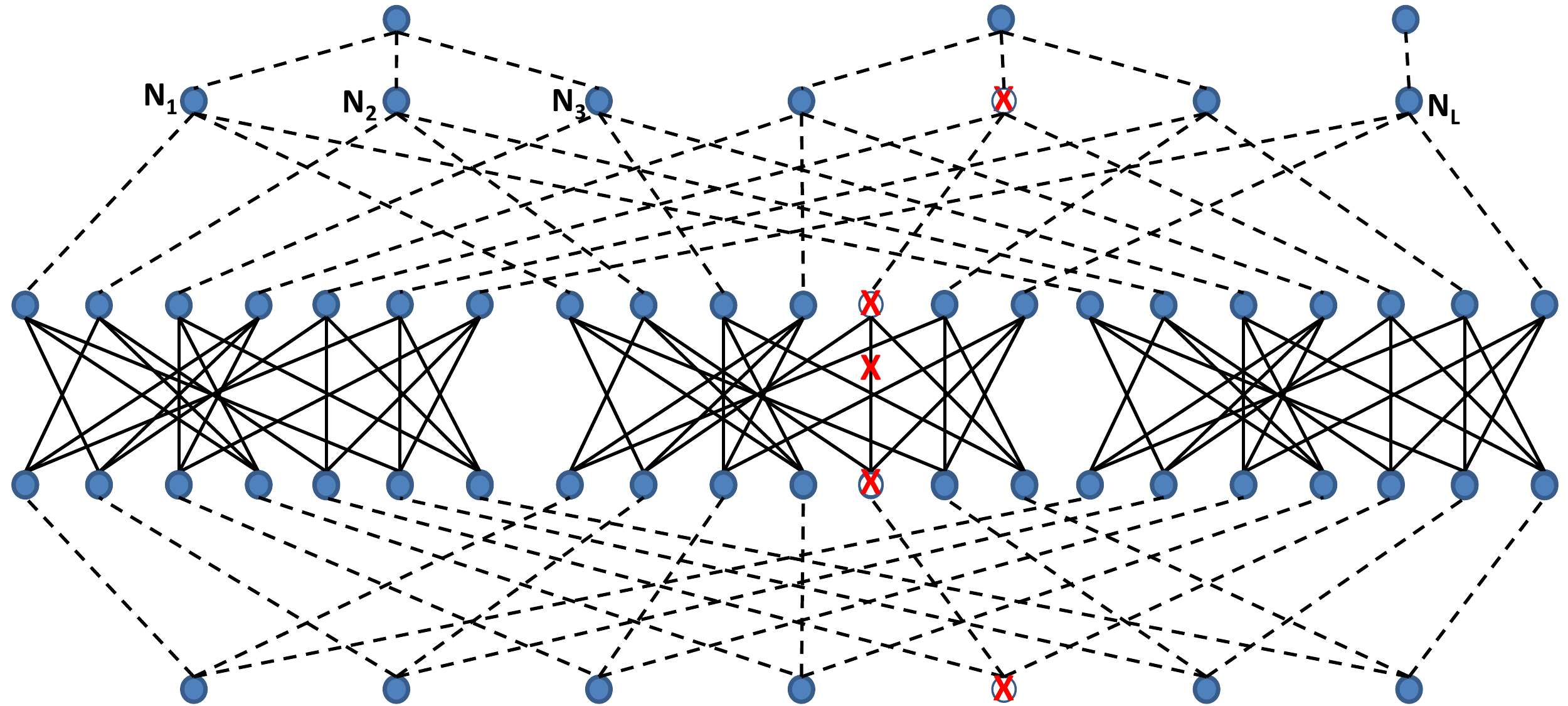}
	\caption[Example bipartite graph code for $5$ erasures]{An example code obtained from Construction~\ref{const:five_regular} for $r=3,L=7$. 'X' indicates a typical five erasure pattern that the $(n,k,r,4)_\text{seq}$ code in Construction~\ref{const:four_reg} cannot correct.}
	\label{fig:five_bipart}
\end{figure}

\bibliographystyle{IEEEtran}
\bibliography{bib_file}	

\begin{thebibliography}{10}
\providecommand{\url}[1]{#1}
\csname url@samestyle\endcsname
\providecommand{\newblock}{\relax}
\providecommand{\bibinfo}[2]{#2}
\providecommand{\BIBentrySTDinterwordspacing}{\spaceskip=0pt\relax}
\providecommand{\BIBentryALTinterwordstretchfactor}{4}
\providecommand{\BIBentryALTinterwordspacing}{\spaceskip=\fontdimen2\font plus
\BIBentryALTinterwordstretchfactor\fontdimen3\font minus
  \fontdimen4\font\relax}
\providecommand{\BIBforeignlanguage}[2]{{%
\expandafter\ifx\csname l@#1\endcsname\relax
\typeout{** WARNING: IEEEtran.bst: No hyphenation pattern has been}%
\typeout{** loaded for the language `#1'. Using the pattern for}%
\typeout{** the default language instead.}%
\else
\language=\csname l@#1\endcsname
\fi
#2}}
\providecommand{\BIBdecl}{\relax}
\BIBdecl

\bibitem{BalPraKum}
\BIBentryALTinterwordspacing
S.~B. Balaji, K.~P. Prasanth, and P.~V. Kumar, ``Binary codes with locality for
  multiple erasures having short block length,'' \emph{CoRR}, 2016. [Online].
  Available: \url{http://arxiv.org/abs/1601.07122/}
\BIBentrySTDinterwordspacing

\bibitem{GopHuaSimYek}
P.~Gopalan, C.~Huang, H.~Simitci, and S.~Yekhanin, ``{On the Locality of
  Codeword Symbols},'' \emph{IEEE Trans. Inf. Theory}, vol.~58, no.~11, pp.
  6925--6934, Nov. 2012.

\bibitem{PapDim}
D.~Papailiopoulos and A.~Dimakis, ``Locally repairable codes,'' in
  \emph{Information Theory Proceedings (ISIT), 2012 IEEE International
  Symposium on}, July 2012, pp. 2771--2775.

\bibitem{OggDat}
F.~Oggier and A.~Datta, ``Self-repairing homomorphic codes for distributed
  storage systems,'' in \emph{INFOCOM, 2011 Proceedings IEEE}, April 2011, pp.
  1215--1223.

\bibitem{HuaChenLi}
C.~Huang, M.~Chen, and J.~Li, ``Pyramid codes: Flexible schemes to trade space
  for access efficiency in reliable data storage systems,'' in \emph{Network
  Computing and Applications, 2007. NCA 2007. Sixth IEEE International
  Symposium on}, July 2007, pp. 79--86.

\bibitem{KamPraLalKum}
G.~Kamath, N.~Prakash, V.~Lalitha, and P.~Kumar, ``Codes with local
  regeneration,'' in \emph{Information Theory and Applications Workshop (ITA),
  2013}, Feb 2013, pp. 1--5.

\bibitem{TamBar_Optimal_LRC}
I.~Tamo and A.~Barg, ``A family of optimal locally recoverable codes,''
  \emph{IEEE Trans. Inf. Theory}, vol.~60, no.~8, pp. 4661--4676, 2014.

\bibitem{PraLalKum}
\BIBentryALTinterwordspacing
N.~Prakash, V.~Lalitha, and P.~V. Kumar, ``Codes with locality for two
  erasures,'' \emph{CoRR}, vol. abs/1401.2422, 2014. [Online]. Available:
  \url{http://arxiv.org/abs/1401.2422}
\BIBentrySTDinterwordspacing

\bibitem{RawMazVis}
A.~Rawat, A.~Mazumdar, and S.~Vishwanath, ``On cooperative local repair in
  distributed storage,'' in \emph{Information Sciences and Systems (CISS), 2014
  48th Annual Conference on}, 2014, pp. 1--5.

\bibitem{SonYue_Binary_local_repair}
\BIBentryALTinterwordspacing
W.~Song and C.~Yuen, ``Binary locally repairable codes - sequential repair for
  multiple erasures,'' \emph{CoRR}, vol. abs/1511.06034, 2015. [Online].
  Available: \url{http://arxiv.org/abs/1511.06034}
\BIBentrySTDinterwordspacing

\bibitem{SonYue_3_Erasure}
\BIBentryALTinterwordspacing
------, ``Locally repairable codes with functional repair and multiple erasure
  tolerance,'' \emph{CoRR}, vol. abs/1507.02796, 2015. [Online]. Available:
  \url{http://arxiv.org/abs/1507.02796}
\BIBentrySTDinterwordspacing

\bibitem{SonDauYueLi}
W.~Song, S.~H. Dau, C.~Yuen, and T.~Li, ``Optimal locally repairable linear
  codes,'' \emph{Selected Areas in Communications, IEEE Journal on}, vol.~32,
  no.~5, pp. 1019--1036, May 2014.

\bibitem{ZhaWanGe}
\BIBentryALTinterwordspacing
J.~Zhang, X.~Wang, and G.~Ge, ``Some improvements on locally repairable
  codes,'' \emph{CoRR}, vol. abs/1506.04822, 2015. [Online]. Available:
  \url{http://arxiv.org/abs/1506.04822}
\BIBentrySTDinterwordspacing

\bibitem{HuaYaaUchSie}
\BIBentryALTinterwordspacing
P.~Huang, E.~Yaakobi, H.~Uchikawa, and P.~H. Siegel, ``Binary linear locally
  repairable codes,'' \emph{CoRR}, vol. abs/1511.06960, 2015. [Online].
  Available: \url{http://arxiv.org/abs/1511.06960}
\BIBentrySTDinterwordspacing

\bibitem{TamBarFro}
\BIBentryALTinterwordspacing
I.~Tamo, A.~Barg, and A.~Frolov, ``Bounds on the parameters of locally
  recoverable codes,'' \emph{CoRR}, vol. abs/1506.07196, 2015. [Online].
  Available: \url{http://arxiv.org/abs/1506.07196}
\BIBentrySTDinterwordspacing

\bibitem{KimNamSong}
J.-H. Kim, M.-Y. Nam, and H.-Y. Song, ``Binary locally repairable codes from
  complete multipartite graphs,'' in \emph{Information and Communication
  Technology Convergence (ICTC), 2015 International Conference on}, Oct 2015,
  pp. 1093--1095.

\bibitem{SheFuGua}
L.~Shen, F.~Fu, and X.~Guang, ``On the locality and availability of linear
  codes based on finite geometry,'' \emph{{IEICE} Transactions}, vol. 98-A,
  no.~11, pp. 2354--2355, 2015.

\bibitem{WanZha_Combinatorial_Repair_locality}
A.~Wang and Z.~Zhang, ``Repair locality from a combinatorial perspective,'' in
  \emph{Proc. IEEE Int. Symp. Inf. Theory (ISIT)}, 2014, pp. 1972--1976.

\bibitem{JuaHolOgg}
L.~Pamies-Juarez, H.~Hollmann, and F.~Oggier, ``Locally repairable codes with
  multiple repair alternatives,'' in \emph{Information Theory Proceedings
  (ISIT), 2013 IEEE International Symposium on}, July 2013, pp. 892--896.

\bibitem{RawPapDimVis_arxiv}
\BIBentryALTinterwordspacing
A.~S. Rawat, D.~S. Papailiopoulos, A.~G. Dimakis, and S.~Vishwanath, ``Locality
  and availability in distributed storage,'' \emph{CoRR}, vol. abs/1402.2011,
  2014. [Online]. Available: \url{http://arxiv.org/abs/1402.2011}
\BIBentrySTDinterwordspacing

\bibitem{WangZhang_multiple_erasure}
A.~Wang and Z.~Zhang, ``Repair locality with multiple erasure tolerance,''
  \emph{IEEE Trans. Inf. Theory}, vol.~60, no.~11, pp. 6979--6987, 2014.

\end{thebibliography}

\end{document}